\newtheorem{definition}{Definition}[section]
\newtheorem{lemma}{Lemma}[section]
\newtheorem{theorem}{Theorem}[section]
\newcommand{\doubleroot}[1]{
	{\lceil{2\sqrt{#1}-1}\rceil}
}
\newcommand{\interior}{\operatorname{int}}
\newcommand{\Mod}{~\mathrm{mod}~}
\begin{document}

\title{Counting Blanks in Polygonal Arrangements}

\author{
Arseniy Akopyan%
\thanks{Institute of Science and Technology Austria (IST Austria), Am Campus~1, 3400 Klosterneuburg, Austria
{akopjan@gmail.com}
}
\and
Erel Segal-Halevi%
\thanks{Ariel University (Ariel 40700) and Bar-Ilan University (Ramat-Gan 52900), Israel.
{erelsgl@gmail.com}
}}

\maketitle

\begin{abstract}
Inside a two dimensional region (``cake''), there are $m$ non-overlapping tiles of a certain kind (``toppings'').
We want to expand the toppings while keeping them non-overlapping, and possibly add some blank pieces of the same ``certain kind'', such that the entire cake is covered. How many blanks must we add?

We study this question in several cases: (1) The cake and toppings are general polygons. (2) The cake and toppings are convex figures. (3) The cake and toppings are axes-parallel rectangles. (4) The cake is an axes-parallel rectilinear polygon and the toppings are axes-parallel rectangles.
In all four cases, we provide tight bounds on the number of blanks.
\end{abstract}

%

\section*{Introduction}
Consider a two-dimensional cake $C$ with $m$ non-overlapping
toppings $Z_1,Z_2,\ldots,Z_m$. We
want to cut the cake without harming the toppings.
I.e, we want to partition the entire cake to non-overlapping pieces $Z_1',Z_2',\ldots,$ such that each topping is contained in a piece and each piece contains at most a single topping. 
There are some geometric constraints on the pieces, e.g, they should be polygonal or convex or rectangular. 
This might require us to add some ``blanks'' --- pieces with no topping. For example, in the rectangular cake at the left of Figure \ref{fig:intro} there are $m=4$ rectangular toppings. If the pieces must be axes-parallel rectangles, then we will need to add at least one blank, denoted at the right by $Z_5'$.
\begin{figure}
\begin{center}
\includegraphics[width=0.3\textwidth]{graphics/rect-13.mps}
\hskip .15\textwidth
\includegraphics[width=0.3\textwidth]{graphics/rect-14.mps}
\end{center}
\caption{
\label{fig:intro}
\textbf{Left:} a rectangular cake with 4 rectangular toppings.
\\
\textbf{Right:} an expansion of the toppings to 4 larger pieces and a fifth blank piece.
}
\end{figure}
Given $m$ and the geometric constraint on the pieces, how many blanks must we add in the worst case (for the worst initial arrangement of toppings)?

Besides cutting cakes, an additional application of this question is for re-division of land \cite{segalhalevi2018redividing}. 
There are some small lots on a piece of land. The owners have built on their initial lots and need to keep them. They would like to expand their lots and then fill the land with additional lots --- but not too many additional lots. All lots are required to have a ``nice'' geometric shape. How may additional lots do they need to add in the worst case?

This paper answers this question under three different geometric constraints on the toppings and pieces: polygonality, convexity and rectangularity. 
In all these cases, we prove that for any initial arrangement of toppings there exists a \emph{maximal expansion} --- where the toppings expand inside the cake until they cannot expand any further while keeping the geometric constraints.
Since we are interested in a worst-case upper bound, it is sufficient to consider such maximal arrangements, since the initial arrangement might already be maximal. So our question becomes: how many blanks can there be in a maximal arrangement of pieces? We answer this question in four cases.

\textbf{1.} The cake and toppings are polygons. 
Then in any maximal arrangement, the entire cake is covered --- there are no blanks.

\textbf{2.} The cake and toppings are convex figures. 
Then in any maximal arrangement, the uncovered spaces are all convex, and there are at most $2m-5$ of them.

\textbf{3.} The cake and toppings are  axes-parallel rectangles, as in Figure \ref{fig:intro}.
Then in any maximal arrangement,
all uncovered spaces are axes-parallel rectangles, and there are at most $m-\doubleroot{m}$ of them

\textbf{4.} The toppings are still axes-parallel rectangles, but the cake may be an arbitrary simply-connected axes-parallel polygon. Let $T$ be the number of reflex vertices (270-degree interior angles) in the cake. 
Then in any maximal arrangement, the remaining uncovered spaces can be partitioned into $b$ axes-parallel rectangles, where $b\leq m-\doubleroot{m}+T$.

All the results are tight in the following sense: In each case, for every $m$ (and $T$), there is an explicit construction where the number of blanks equals the bound.  

In addition, for cases 2 and 3 we consider a related question. Suppose the toppings lie in the unbounded plane $\mathbb{R}^2$, and we do not expand them.
We define a \emph{hole} as a connected component of the plane that remains outside of toppings (a hole can be unbounded). How many holes can there be?
We prove the following answers.

\textbf{2'} In any arrangement of $m$ pairwise-disjoint convex figures in the plane, the number of holes is at most $2m-4$.

\textbf{3'} In any arrangement of $m$ pairwise-disjoint axes-parallel rectangles in the plane, the number of holes is at most $m-2$.

Both these bounds are tight in the same sense as above. It is interesting that the numbers are equal up to one in case 2, but quite different in case 3.

For each $k\in\{1,2,3,4\}$, case $k$ is handled in Section $k$.

In each case, the proof consists of two steps. First, we show that any initial arrangement of toppings has a maximal expansion. Then, we prove that in any maximal arrangement, the number of blanks is bounded as claimed. 
This method works in cases 1--4, but it may not work in other cases; some counter-examples are discussed in Section \ref{sec:conclusion}.

\section{Polygonal Cake and Polygonal Pieces}
In this warm-up section, the cake and each of the $m$ toppings is a polygon. Initially, we allow polygons to be not simply-connected, in contrast to the classic definition.
We prove that in this case it is possible to expand the toppings such that there will be no blanks.
\begin{theorem}
\label{thm:conn}
Let $C$ (``cake'') be a polygon and $Z_1, \dots,Z_m$ (``toppings'') be pairwise-disjoint polygons contained in $C$. Then there exists a partition of $C$ into polygons, $C=Z'_1 \sqcup \cdots \sqcup Z'_{m}$,\footnote{The symbol $\sqcup$ denotes union of pairwise-interior-disjoint sets.}
where $Z_i\subseteq Z'_i$ for all $i\leq m$.
\end{theorem}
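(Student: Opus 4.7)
The plan is to partition $C$ by carving out the uncovered region $F := C \setminus \bigcup_i \interior(Z_i)$ and absorbing each of its connected components into some adjacent topping. Since $C$ and each $Z_i$ are polygons, $F$ is a closed polygonal set with finitely many connected components $R_1, \ldots, R_k$, each itself a closed polygonal region whose boundary lies in $\partial C \cup \bigcup_i \partial Z_i$.

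The crucial step is to show that every $R_j$ shares a positive-length boundary segment with at least one topping. Since $m \geq 1$ and $\interior(Z_1) \neq \emptyset$, we have $F \neq C$, so no $R_j$ equals $C$. If $\partial R_j$ were contained entirely in $\partial C$, then $R_j$ would be an open-and-closed subset of the connected polygon $C$, forcing $R_j = C$ -- a contradiction. Hence $\partial R_j$ must contain points of some $\partial Z_i$ lying in $\interior(C)$; because $\partial R_j$ is a $1$-dimensional polygonal curve, its intersection with $\bigcup_i \partial Z_i$ cannot consist solely of finitely many vertices (that would leave $\partial R_j$ without enough edges to bound a $2$-dimensional region), so $R_j$ genuinely shares a full edge segment with some $Z_i$.

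Given this, I would pick for each component $R_j$ an arbitrary adjacent topping $Z_{i(j)}$ and define
$$Z_i' \;:=\; Z_i \;\cup\; \bigcup_{j:\, i(j) = i} R_j.$$
Then $Z_i'$ is a finite union of closed polygonal regions meeting along common edges, so its boundary is a finite union of polygonal Jordan curves; together with connectivity inherited from adjacency to $Z_i$, this makes $Z_i'$ a polygon in the possibly non-simply-connected sense allowed in this section. The $Z_i'$ are pairwise interior-disjoint because the $Z_i$ and the $R_j$ are, and $\bigcup_i Z_i' \;=\; \bigcup_i Z_i \;\cup\; \bigcup_j R_j \;=\; C$, yielding the required partition.

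The main step that requires care is the adjacency argument -- establishing that each $R_j$ shares a full edge (not merely a vertex) with some topping, so that the amalgamated $Z_i'$ is genuinely polygonal rather than pinched at an isolated point. This ultimately reduces to a dimension count on the $1$-dimensional curve $\partial R_j$, and managing the edge-matching is the only mildly delicate piece of an otherwise direct construction.
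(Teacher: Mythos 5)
Your proposal is correct and follows essentially the same route as the paper's proof: decompose the uncovered region into finitely many polygonal components, observe that each must share a positive-length edge with some topping (the paper justifies this in one line via connectedness of $C$, which you elaborate with the open-and-closed argument and the dimension count on $\partial R_j$), and absorb each component into an adjacent topping. The only difference is cosmetic --- you assign all components in one pass while the paper iterates --- so there is nothing substantive to add.
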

\begin{proof}
The new partition can be created by the following procedure.

Let $C^*$ be the cake outside the toppings, $C^* := C\setminus \bigcup_{i = 1}^m Z_i$. Since $C$ and the $Z_i$ are all polygons, $C^*$ is a union of a finite number of polygons, say: $C^* = H_1\sqcup \cdots \sqcup H_n$, where the sides of each $H_j$ are made of subsets of the sides of $C$ and the sides of the toppings. Moreover, every $H_j$ must have at least one side that overlaps with a side of some topping $Z_i$ (since the cake itself is connected). So $Z_i\cup H_j$ is a polygon. Replace $Z_i$ with $Z_i \cup H_j$ and repeat the procedure for the remaining components of $C^*$.
\end{proof}
Note: Theorem \ref{thm:conn} implies that, if the initial arrangement of toppings is \emph{any} maximal arrangement, then it has no blanks.

While Theorem \ref{thm:conn} is easy to prove, it is not so easy to extend to toppings that are connected but not polygonal. A counter-example is given in Section \ref{sec:conclusion}.

On the other hand, Theorem \ref{thm:conn} can be extended to the case where the cake and the toppings are \emph{simply-connected polygons}. Initially, 
add to each topping thin polygonal tubes that connect it to the cake boundary and to each of the other toppings, as long is it is possible to do so without overlapping other toppings (see Figure \ref{fig:tubes}). 
Then proceed as in Theorem \ref{thm:conn}: add each component of $C^*$ to an adjacent topping. The tubes guarantee that the toppings remain simply-connected.

\begin{figure}
\begin{center}
\includegraphics[angle=90]{graphics/rect-15.mps}
\hskip .1\textwidth
\includegraphics[angle=90]{graphics/rect-16.mps}
\end{center}
\caption{
\label{fig:tubes}
\textbf{Left:} a cake with three simply-connected topping.
\\
\textbf{Right:} Tubes connecting the toppings to each other and to the cake boundary.
}
\end{figure}

\section{Convex Cake and Convex Pieces}
In this section, the cake and each of the $m$ toppings is a \emph{convex figure}. 

\begin{theorem}
\label{thm:convex case}
Let $C$ (``cake'') be a convex figure and $Z_1, \dots,Z_m$ (``toppings'') be pairwise-disjoint convex figures in $C$, where $m\geq 3$.
There exists a partition of $C$ into $m+b$ convex figures, $C=Z'_1 \sqcup \cdots \sqcup Z'_{m+b}$,
where $Z_i\subseteq Z'_i$ for all $i\leq m$, and $b\leq 2 m - 5$.
Moreover, for every $m$ there exists an arrangement of $m$ toppings where in every such partition, $b = 2 m - 5$.
\end{theorem}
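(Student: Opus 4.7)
My plan is to prove the theorem in three stages.

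\emph{Stage 1 (existence of a maximal expansion; convex holes).} Given the initial $Z_1,\ldots,Z_m$, I would first establish the existence of a \emph{maximal expansion}: a tuple $(Z_1',\ldots,Z_m')$ of pairwise interior-disjoint convex pieces in $C$ with $Z_i\subseteq Z_i'$, such that no individual $Z_i'$ can be strictly enlarged while preserving these constraints. Existence follows by a compactness argument in the Hausdorff topology on closed subsets of $C$: the set of admissible tuples is compact, and a tuple of maximum total area is maximal. Next, in any maximal arrangement, every connected component $H$ of $C\setminus\bigcup_i Z_i'$ is convex; otherwise $H$ would have a reflex boundary point $p$, and the adjacent topping $Z_i'$ would have a concave notch at $p$, so pushing the boundary of $Z_i'$ slightly into $H$ near $p$ would yield a strictly larger convex piece still disjoint from the others, contradicting maximality. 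The convex holes are the blank pieces $Z_{m+1}',\ldots,Z_{m+b}'$.

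\emph{Stage 2 (bounding the number of holes).} Let $b$ denote the number of convex holes. I would form the planar dual adjacency graph $G$ with vertex set $\{Z_1',\ldots,Z_m',H_1,\ldots,H_b,*\}$, where $*$ represents the exterior of $C$, with an edge between every pair of vertices whose regions share a $1$-dimensional boundary arc. Three structural observations are crucial: \emph{(i)} no two holes are adjacent, since a shared $1$-dimensional boundary would belong to no piece and the two components would in fact be a single hole; \emph{(ii)} each hole has at least three neighbors, because a bounded convex region cannot be closed by fewer than three boundary arcs; \emph{(iii)} at least three pieces touch $\partial C$: if only two convex pieces touched $\partial C$, then by the convex-hull/separating-chord argument their closures would already cover all of $C$, forcing $m+b\leq 2$ and contradicting $m\geq 3$. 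Restricting $G$ to the bipartite planar subgraph between toppings on one side and $\{\text{holes}\}\cup\{*\}$ on the other, the bipartite planar inequality $|E|\leq 2|V|-4=2m+2b-2$, combined with the degree-sum lower bound $3b+3$ on the hole/$*$-side (from (ii) and (iii)), yields $3b+3\leq 2m+2b-2$, i.e., $b\leq 2m-5$.

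\emph{Stage 3 (matching lower bound).} For the tight construction I would build the following recursive arrangement. Start with a triangular cake $C$ and three triangular toppings placed at the three corners, maximally expanded to leave a single central triangular hole: this gives $(m,b)=(3,1)=(3,\,2\cdot 3-5)$. At each subsequent step, insert a small triangular topping inside an existing hole; its maximal expansion splits that hole into three smaller triangular holes, adding one topping and two holes. After $k$ steps, $(m,b)=(3+k,\,1+2k)$, attaining $b=2m-5$.

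The main obstacle I anticipate is Stage 2: sharpening the naive bipartite planar bound (which yields only $b\leq 2m-4$) to the tight $b\leq 2m-5$. The sharpening rests crucially on observation (iii) and on a careful treatment of any holes that themselves touch $\partial C$, whose H--$*$ edges must be handled so that the outer-face/degree contribution in the bipartite bound is not weakened.
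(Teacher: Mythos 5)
Your Stage 3 (the matching lower bound) does not work, and this is the most serious problem. The theorem requires an arrangement for which \emph{every} admissible partition has $b=2m-5$ blanks, so the construction must make it impossible to do better. Your recursive step inserts one small topping into an existing convex hole and asserts that its maximal expansion ``splits that hole into three smaller holes''; but a convex hole $H$ is itself a convex set disjoint from all other toppings, so the inserted topping can always be expanded to \emph{all of} $H$, eliminating that hole entirely. Thus each insertion can be answered by a partition that decreases, not increases, the number of blanks, and the claimed count $(m,b)=(3+k,1+2k)$ is not forced. The base case has the same defect: three corner triangles around a central triangular hole (e.g.\ the medial-triangle configuration) are not maximal --- the convex hull of one corner topping and the hole is still convex and disjoint from the other two, so $b=0$ is attainable there. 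What is needed is a \emph{blocking} structure: at every vertex of every blank, the topping that could expand across the adjacent hole-edge must share a one-dimensional edge with a neighboring topping at that vertex, so that any convex enlargement would necessarily overlap that neighbor. The paper achieves this with a pinwheel-style arrangement of hexagons (after Edelsbrunner--Robison--Shen) inside a triangular cake, with $3$ outer hexagons, $3k$ intermediate ones, and a small cluster of $3$--$5$ inner polygons depending on $m \bmod 3$; the triangular blanks are then counted by summing adjacencies and dividing by $3$.

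In Stage 2 there is a gap you yourself flag but do not close: your degree bound requires every hole to have at least three \emph{topping} neighbors, which can fail for a hole touching $\partial C$ (such a hole can be enclosed by two topping arcs plus an arc of $\partial C$). The missing ingredient is that in a maximal arrangement \emph{no hole meets the cake boundary at all}; the paper gets this (together with convexity of holes) from Pinchasi's Claim~2, and also notes it follows from the edge/vertex counting of its Section~3. Once that is in place, your bipartite count ($|E|\le 2|V|-4$ with degree at least $3$ at each hole and at the exterior vertex $*$) does give $b\le 2m-5$, and it is a legitimate alternative to the paper's route, which instead applies Euler's formula to the adjacency graph of the toppings alone (holes, plus the unbounded region, are faces; simplicity of the graph gives $|F|\le 2|V|-4$, and the exterior face absorbs one unit, leaving $2m-5$ bounded holes). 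Your observation (iii) is correct but needs the same fact: only after $\partial C$ is known to be covered by topping closures does the separating-chord argument show that two toppings covering $\partial C$ would cover all of $C$. Finally, your Stage 1 convexity argument is only heuristic as written --- a convex topping cannot have a ``concave notch,'' and the convex hull of a topping with a nearby point of the hole can sweep far from that point and collide with other toppings --- so this step too ultimately rests on the (citable) claim about maximal arrangements rather than on the local picture you describe.
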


The techniques and example in the proof are very similar to \cite{Edelsbrunner1990Covering}.
We first prove the second part of the theorem by showing an arrangement for which $b = 2m-5$.

\subsection{Lower bound}
\begin{figure}
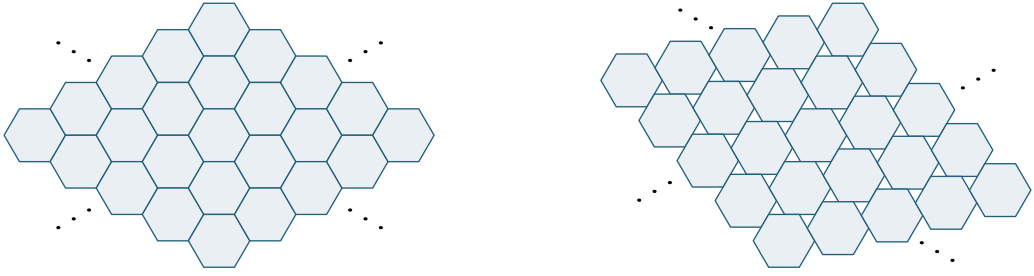

	\begin{center}
		\includegraphics{graphics/fig-top-1.mps}
		\hskip 2cm
		\includegraphics{graphics/fig-top-2.mps}
	\end{center}
	\caption{\label{fig:holes-convex-hexagons} 
		\textbf{Left:} a tiling of the plane by hexagons.
		\\
		\textbf{Right:} a modification of the tiling where near each vertex there is a blank.
	}
\end{figure}
To get intuition for the lower bound construction, consider the tiling of the plane with hexagons, shown in Figure \ref{fig:holes-convex-hexagons}/Left.
Each point in which $3$ hexagons meet is called a \emph{vertex}.
By slightly moving the hexagons, it is possible to create, near each vertex, a triangular blank, as is shown in Figure \ref{fig:holes-convex-hexagons}/Right.
No hexagon can be expanded towards an adjacent blank while remaining convex and disjoint from the other hexagons (i.e, the arrangement is maximal). Each blank touches three hexagons and each hexagon touches six blanks. Hence, the number of blanks is asymptotically twice the number of hexagons, which gives an initial approximation $b\approx 2 m$.
In a finite tiling, there are boundary conditions. For example, when the tiling is contained in a square cake, there are $\Theta(\sqrt{m})$ triangles near the boundary of $C$. These can be discarded or joined with nearby toppings, so the total number of blanks is only $2m-\Theta(\sqrt{m})$. 

Figure \ref{fig:holes-convex-triangles} shows a more sophisticated arrangement of toppings that has only a constant number of blanks in the boundary. The arrangement is based on a construction of Edelsbrunner, Robison and Shen \cite{Edelsbrunner1990Covering}. For every integer $k\geq 0$, the toppings are:
\begin{itemize}
\item $3$ large external hexagons;
\item $3k$ medium intermediate hexagons;
\item $3$ or $4$ or $5$ small inner polygons, depending on whether $m \Mod 3$ is $0$ or $1$ or $2$.
\end{itemize}
Near the vertices of the toppings, there are triangular blanks. Note that the arrangement is \emph{maximal} --- there is no way to expand any topping into any blank. 
Therefore the final partition consists exactly of the $m$ toppings and $b$ blanks shown in the figure.
We now calculate $b$. Each blank is triangular so it is adjacent to $3$ toppings, so below each blank is counted three times:
\begin{itemize}
\item Each external hexagon is adjacent to $3$ blanks, for a total of $9$;
\item Each intermediate hexagon is adjacent to $6$ blanks, for a total of $18k$;
\item Regarding the inner polygons:
\begin{itemize}
\item  When $m \Mod 3=0$ --- there are three inner quadrangles, which are adjacent to $3\cdot 4=12$ blanks;
\item  When $m \Mod 3=1$ --- there are one triangle and three pentagons, which are adjacent to $3+3\cdot 5=18$ blanks;
\item  When $m \Mod 3=2$ --- there are a triangle, a quadrangle, a pentagon and 2 hexagons, which are adjacent to $3+4+5+2\cdot 6=24$ blanks.
\end{itemize}
\end{itemize}
All in all, the number of toppings and blanks is one of the following $\forall k\geq 0$  (see Figure \ref{fig:holes-convex-triangles}/Top):
\begin{itemize}
\item $m=6+3k$ and $b=(9+18k+12)/3=7+6k=2m-5$;
\item $m=7+3k$ and $b=(9+18k+18)/3=9+6k=2m-5$;
\item $m=8+3k$ and $b=(9+18k+24)/3=11+6k=2m-5$.
\end{itemize}
or one of the following (see Figure \ref{fig:holes-convex-triangles}/Bottom):
\begin{itemize}
	\item $m=3$ and $b=1=2m-5$ 
	\item $m=4$ and $b=3=2m-5$
	\item $m=5$ and $b=5=2m-5$
\end{itemize}
In all cases the number of blanks is $2m-5$, as claimed.\qed

\begin{figure}
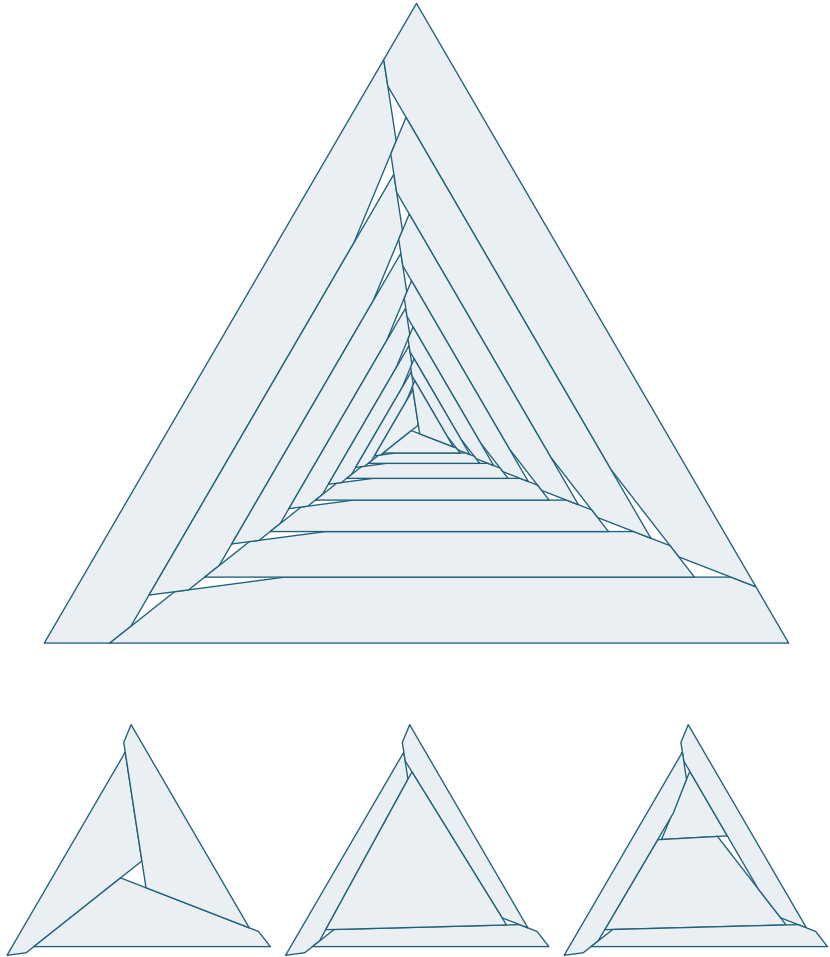

	\begin{center}
		\includegraphics{graphics/alteredel-1.mps}
		\vskip 1cm
		\includegraphics{graphics/alteredel-2.mps}
		\includegraphics{graphics/alteredel-3.mps}
		\includegraphics{graphics/alteredel-4.mps}
	\end{center}
	\caption{\label{fig:holes-convex-triangles} 
		\small
		\textbf{Top:} A maximal arrangement of hexagonal toppings in a triangle cake.
		\\
		\textbf{Bottom:} 3 or 4 or 5 toppings for the inner part of the cake, depending on $m \Mod 3$.
	}
\end{figure}

\subsection{Upper bound}
For the upper bound we first prove a theorem for an unbounded cake.
\begin{theorem}
\label{thm:nonconvex holes}
Let $Z_1,\ldots,Z_m$, with $m\geq 3$, be pairwise-interior-disjoint convex figures in the plane. Define a \emph{hole} as a connected component of $\mathbb{R}^2\setminus \cup_{i=1}^m Z_i$. Then there are at most $2m-4$ holes. This bound is tight.
\end{theorem}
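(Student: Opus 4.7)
The plan is to bound $b$ by applying Euler's formula on the sphere to a suitable contact graph $G$ of the arrangement. Define $G$ to have one vertex $v_i$ per figure $Z_i$, with an edge between $v_i$ and $v_j$ exactly when $Z_i \cap Z_j \ne \emptyset$. Because each $Z_i$ is convex with interiors pairwise disjoint, $Z_i \cap Z_j$ is convex and at most one-dimensional, hence connected (a point or a segment), so $G$ has no loops or multi-edges and is simple. I would realize $G$ as a planar graph on $S^2 = \mathbb{R}^2 \cup \{\infty\}$ by placing $v_i$ in the interior of $Z_i$ and drawing each edge $v_iv_j$ as a simple arc that passes through a chosen point of $Z_i \cap Z_j$ and otherwise lies in the interior of $Z_i \cup Z_j$; distinct arcs meet only at their common endpoints, so this is a valid planar embedding.

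The heart of the proof is the identity
\[
b \;=\; 1 + c + E - m,
\]
where $E$ is the number of edges of $G$ and $c$ is the number of its connected components. I would obtain this in either of two equivalent ways. The cleanest is a direct Euler-characteristic computation: since $S^2$ decomposes as the closed set $\bigcup_i Z_i$ and the open set of holes $\bigsqcup_j H_j$, inclusion--exclusion gives $\chi\bigl(\bigcup_i Z_i\bigr) = m - E$ (using $\chi(Z_i)=1$ and $\chi(Z_i\cap Z_j)=1$ for every touching pair), and bookkeeping for $\sum_j \chi(H_j)$ matches this with $b$ and $c$. Alternatively, one can argue that the faces of the embedded $G$ on $S^2$ biject with the holes, and apply the Euler relation $V - E + F = 1 + c$.

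Granted the identity, the bound follows from standard planarity arguments. Any simple planar graph on $m$ vertices with $c$ components may be extended to a connected simple planar graph by adding $c-1$ edges routed through shared faces of the current embedding; since a connected simple planar graph on $m \ge 3$ vertices has at most $3m - 6$ edges, we obtain $E + (c-1) \le 3m - 6$, i.e.\ $c + E \le 3m - 5$, and therefore $b \le 2m - 4$.

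I expect the main technical difficulty to lie in the identity $b = 1 + c + E - m$ when $G$ is disconnected, since in that case some holes can be multiply connected (for example, an annular hole appears whenever one maximal cluster of touching figures is nested inside a hole of another cluster) and the face-to-hole correspondence has to be set up carefully. For tightness, I would realize a maximal planar triangulation on $m$ vertices (with $E = 3m - 6$ and $c = 1$) as the tangency graph of $m$ interior-disjoint disks --- possible by Koebe's circle-packing theorem, and also constructible directly in the spirit of Figure~\ref{fig:holes-convex-triangles} --- producing exactly $b = 2m-4$ holes.
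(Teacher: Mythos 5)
Your proof is correct and follows essentially the same route as the paper's: the contact graph of the convex figures is simple and planar, each hole lies in a distinct face, and Euler's formula together with $E\le 3m-6$ (after adding edges to connect the components) yields $2m-4$. The only caveat is that your identity $b = 1+c+E-m$ should really be the inequality $b \le 1+c+E-m$, since in degenerate configurations where three or more figures share a common boundary point a face of $G$ may contain no hole at all; this does not affect the upper bound, and your Koebe circle-packing realization of a maximal planar triangulation is a valid alternative to the paper's explicit hexagonal tightness example.
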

\begin{proof}
Define a graph $G(V,E)$ where the 
vertices are the $m$ toppings and 
there is an edge between each two toppings whose boundaries meet. Since the toppings are convex, every two toppings meet at most once (in a single point or segment), so $G$ is planar and simple. 
Each hole is a face of $G$. Therefore, it is sufficient to prove that $|F|\leq 2|V|-4$, where $F$ is the set of faces of the graph.

We can assume w.l.o.g. that $G$ is connected; otherwise we can just add edges between connected components of $G$, since this does not change $V$ or $F$.
Therefore, by Euler's formula:  $|V|-|E|+|F|=2$.
Every edge touches at most 2 faces. Every interior face touches at least 3 edges. In a connected planar graph with at least 3 vertices, the exterior face too touches at least 3 edges. Therefore: $2 |E| \geq 3|F|$. Substituting in Euler's formula gives the result.

Tightness is proved by the same example of the previous subsection, adding 1 for the unbounded region outside the cake.
 \end{proof}
We now return to proving the upper bound in Theorem \ref{thm:convex case}. First, we expand the toppings inside the cake $C$ by the following procedure. Pick an arbitrary topping, say $Z_1$. 
Among all convex figures in $C$ containing $Z_1$ and not overlapping any other topping, choose one that is inclusion-maximal.\footnote{The existence of such maximal element can be proved based on the Kuratowski--Zorn lemma. The proof is straightforward and we omit it. See Lemma 3.3 in \cite{Mohammadi2017Reconstruction}.} Replace $Z_1$ with this maximal element. Repeat the procedure for the remaining toppings.

This procedure results in a \emph{maximal arrangement} --- an arrangement where no topping can be expanded further while remaining convex and disjoint from the others. Pinchasi \cite[Claim 2]{Pinchasi2015Perimeter} proved that, in an maximal arrangement, every hole inside the cake is a convex polygon and does not have a common boundary with the cake.\footnote{
An alternative proof of this fact follows from the proof for the case of rectangle cake and rectangle pieces in Subsection  \ref{sub:rect-upper-a}. It is proved there that every hole must be inner and have only convex vertices; the same proof is valid for the case of convex cake and convex pieces.
}

By Theorem \ref{thm:nonconvex holes}, the total number of holes inside the cake, plus the 
entire unbounded region outside the cake, is at most $2m-4$. Therefore, the total number of bounded holes inside the cake is at most $2m-5$. Since each hole is convex, we can make each hole a single blank piece in the final partition, and get  $b\leq 2m-5$ as claimed.  \qed

\section{Rectangular Cake and Rectangular Pieces}
\label{sec:rect}
In this section, the cake and each of the $m$ toppings is an axes-parallel rectangle.
\begin{theorem}
\label{thm:boxes}
Let $C$ (``cake'') be an axes-parallel rectangle and $Z_1, \dots,Z_m$ (``toppings'') be pairwise-disjoint axes-parallel rectangles in $C$.
There exists a partition of $C$ into $m+b$ axes-parallel rectangles, $C=Z'_1 \sqcup \cdots \sqcup Z'_{m+b}$,
where $Z_i\subseteq Z'_i$ for all $i\leq m$, and $b\leq m-\doubleroot{m}$.
Moreover, for every $m$ there exists an arrangement where in all partitions $b = m-\doubleroot{m}$.
\end{theorem}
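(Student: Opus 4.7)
The plan is to prove the theorem in two parts: the upper bound $b \leq m - \lceil 2\sqrt{m} - 1 \rceil$ and the matching tightness construction. The structure mirrors the convex case of Theorem~2.1: establish existence of a maximal expansion and then bound the number of blank pieces combinatorially.

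For tightness, I would construct an explicit family of arrangements. Taking $k$ close to $\sqrt{m}$, I would place $m$ tiny rectangular toppings in a $k \times k$ grid-like pattern in which all $x$-coordinates and all $y$-coordinates of the toppings are pairwise distinct and arranged so that no topping is horizontally or vertically aligned with any other. The generic positioning should ensure that after any maximal expansion, the only feasible partition consists of $\lceil 2\sqrt{m} - 1 \rceil$ elongated pieces along two adjacent sides of the cake together with $m - \lceil 2\sqrt{m} - 1 \rceil$ small rectangular blanks in the interior. For general $m$, I would modify the template by removing a corner of the $k \times k$ pattern while preserving the tight count.

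For the upper bound, the proof proceeds in three stages. First, I would adapt the Kuratowski--Zorn argument of the convex case to the partial order on axes-parallel rectangular expansions of each topping (using that the union of an ascending chain of nested axes-parallel rectangles is again an axes-parallel rectangle), yielding a maximal expansion in which no topping can be further enlarged. Second, I would record the structural consequences of maximality: every side of every expanded topping is entirely blocked by either the cake boundary or another topping, and hence the complement of the toppings is a union of axis-parallel rectilinear components describable via the grid of lines through the topping sides. Third, I would count the rectangles needed to tile this complement.

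The hard part is this counting step, where the $\sqrt{m}$ saving must emerge from the combinatorics. My plan is to work with the grid induced by the horizontal and vertical lines carrying the sides of the expanded toppings; let $r$ and $c$ denote the numbers of distinct interior horizontal and vertical lines. Using maximality I would derive an inequality relating $m$, $r$, and $c$ of the form $m \geq r + c - 1$ (or a close variant), by arguing that each interior grid line must be supported by at least one topping side and that these supporting toppings can be chosen essentially distinct. I would then argue that the free region decomposes into at most $(r-1)(c-1)$ rectangular blank pieces in the worst case, by carefully merging adjacent free cells. Combining the two estimates via an AM--GM optimization (the worst case being $r \approx c \approx \sqrt{m}$) should yield the target bound $b \leq m - \lceil 2\sqrt{m} - 1 \rceil$. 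The subtle points are managing the ceiling, ensuring that blank cells can actually be grouped into the claimed number of rectangular pieces (not merely listed cell-by-cell), and handling the fact that a single topping may span many grid cells; I expect this will require a short case analysis and possibly the explicit construction of a canonical rectangular partition attached to any given maximal arrangement.
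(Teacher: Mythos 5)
Your high-level skeleton (maximal expansion via Zorn, structural analysis of the complement, then a counting argument) matches the paper's, but both halves of your plan have genuine gaps.

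\textbf{Lower bound.} Tiny toppings in generic position force \emph{nothing}: since no two share an $x$-coordinate, you can slice the cake into $m$ vertical strips, one topping per strip, and get $b=0$. (The paper itself makes this observation for point toppings.) To force $b=m-\doubleroot{m}$ in \emph{every} partition, the initial arrangement must already be maximal. The paper's construction takes a $(k+1)\times(k+1)$ tiling of the cake by squares and perturbs it into a ``pinwheel'' pattern so that each of the $k^2$ interior grid vertices opens into a small square blank into which none of its four neighbouring toppings can expand (each side of each blank is blocked by a perpendicular topping side); then $b=k^2=m-(2k+1)=m-\doubleroot{m}$, with glued-on extra squares handling non-square $m$. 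Largeness and interlocking are essential, not incidental.

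\textbf{Upper bound.} Your two inequalities do not combine to give the result. From $b\leq(r-1)(c-1)$ and $m\geq r+c-1$, AM--GM bounds the \emph{product} from above by $\bigl((m-1)/2\bigr)^2$, i.e.\ $b=O(m^2)$ --- the optimization goes the wrong way, and nothing in your constraints forces $r\approx c\approx\sqrt m$. Worse, the grid generated by all lines through topping sides can have $\Theta(m^2)$ cells, so a cell-based count of blanks against $r$ and $c$ cannot see the $\sqrt m$ saving. What is actually needed is the pair of facts $rc\geq m$ and $b\leq m-(r+c-1)$, and the paper obtains them by a different mechanism: first it proves (by counting blocking vertices on hole boundaries) that in a maximal arrangement \emph{every} hole is an interior axes-parallel rectangle surrounded by exactly four toppings in a pinwheel --- this also disposes of your worry about subdividing blanks, since each hole is already one rectangle. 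It then contracts each hole to a single $4$-vertex by an affine squeeze, turning the arrangement into a partition of the cake into exactly $m$ rectangles in which the blanks have become $4$-vertices, and proves separately (by extending segments to eliminate $3$-vertices until a full $k_1\times k_2$ grid remains, tracking that each such step increases the rectangle count by at least as much as the $4$-vertex count) that any partition of a rectangle into $m$ rectangles has at most $m-\doubleroot{m}$ $4$-vertices. That reduction is the missing idea in your plan.
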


In the proofs below we will use the following property of the function $\lceil{2\sqrt{m}-1}\rceil$:
\footnote{We are grateful to an anonymous referee for suggesting the form $\lceil{2\sqrt{m}-1}\rceil$ for this function.}
\[
\lceil{2\sqrt{m}-1}\rceil= 
\begin{cases} 
	2k & 	k^2<m\leq k^2+k\\
	2k+1& 	k^2+k<m\leq (k+1)^2
\end{cases}.
\]

\subsection{Lower bound}\label{sub:rect-lower}
\begin{figure}
\begin{center}
\includegraphics{graphics/fig-top-3.mps} \hskip 1cm
\includegraphics{graphics/fig-top-4.mps} \hskip 1cm
\includegraphics[width=4cm]{graphics/rect-11.mps}
\end{center}
\caption{\label{fig:holes-rect-worst} 
\textbf{Left:} a tiling of the plane by squares.
\\
\textbf{Middle:} a modification of the tiling where near each vertex there is a blank.
\\
\textbf{Right:} an arrangement contained in a rectangular cake. There are $m=16$ toppings and $m-\doubleroot{m}=9$ blanks.
}
\end{figure}
Consider the tiling in Figure \ref{fig:holes-rect-worst}/Left. Here, four squares meet near each vertex. By moving them slightly, we get the arrangement in Figure \ref{fig:holes-rect-worst}/Middle, where near each vertex there is a square blank. Each blank touches four squares and each square touches four blanks. Hence, the number of squares and blanks is asymptotically the same.

When the tiling is finite, there are boundary conditions. Suppose the cake is a square and it is tiled by $m=(k+1)^2$ smaller squares in a grid of $k+1$ times $k+1$. Then, the number of inner vertices, that can be converted to blanks, is $k^2 = m - (2k + 1) = m-\doubleroot{m}$. See Figure \ref{fig:holes-rect-worst}/Right.

When $m$ is not a square number, e.g. $m=k^2+t$ for some $t>0$, the lower bound can be constructed by gluing $k$ squares at the right and then $k+1$ squares at the top of a $k\times k$ tiling. Every glued square adds a $4$-vertex that can be converted to a blank, except the first one at the right and the first one at the top. These are exactly the points where the function $\doubleroot{m}$ increases by 1. Therefore the number of blanks always remains $m-\doubleroot{m}$. \qed

\subsection{Upper bound part A: All holes are internal axes-parallel rectangles}\label{sub:rect-upper-a}
We first expand the toppings as follows. For the topping $Z_1$ consider a rectangle $Z_1'$ with maximal area among all axis-parallel rectangles contained in $C$, containing $Z_1$ and avoiding all other toppings. Substitute $Z_1$ by $Z_1'$. Repeat for the remaining toppings.
\begin{figure}
\begin{center}
	\includegraphics{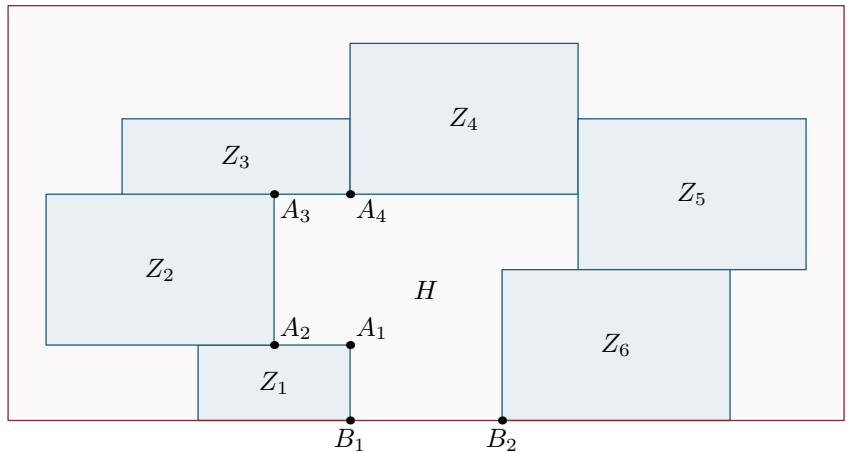}
\caption{\label{fig:holes-rect}A rectangular cake $C$, rectangular toppings $Z_j$, and a hole $H$. Note the configuration is not maximal.}
\end{center}
\end{figure}

We now have a maximal arrangement of toppings.
Again we define a \emph{hole} as a connected component of $C\setminus \cup_{i=1}^m Z_i$.
It is clear that in a maximal arrangement every hole is simply-connected, since a hole that is not simply-connected contains a topping which can be expanded. We will now prove that in any maximal arrangement, every hole is an axes-parallel rectangle and is not adjacent to the cake boundary. 
We need several definitions regarding a hole $H$ (the examples refer to Figure \ref{fig:holes-rect}):
\begin{definition}
\label{def:vertex}
A \emph{hole-vertex} of $H$ is a point on the boundary of $H$ that is a vertex of a topping or of the cake. A hole-vertex is called:
\begin{itemize}
\item \emph{convex} --- if the internal angle adjacent to it is less than $180^\circ$ (i.e, $90^\circ$; like $A_2,A_3, B_1,B_2$);
\item \emph{nonconvex} --- if the internal angle adjacent to it is at least $180^\circ$ (like. $A_4, A_1$).
\end{itemize}
\end{definition}

\begin{definition}
\label{def:edge}
A \emph{hole-edge} of $H$ is a line-segment between adjacent hole-vertices of $H$. A hole-edge is called:
\begin{itemize}
\item \emph{inner-edge} --- if it is contained in the interior of $C$;
\item \emph{boundary-edge} --- if it is contained in the boundary of $C$.
\end{itemize}
\end{definition}

\begin{definition}
A hole-vertex is called:
\begin{itemize}
\item \emph{inner-vertex} --- if it is contained in the interior of $C$ (links two inner-edges; like $A_i$);
\item \emph{boundary-vertex} --- if it is
on the boundary of $C$ (touches a boundary-edge; like $B_i$).
\end{itemize}
\end{definition}
\begin{definition}
A hole is called:
\begin{itemize}
\item \emph{inner-hole} --- if it is contained in the interior of $C$ (so it has only inner-edges);
\item \emph{boundary-hole} --- if it has a common boundary with $C$ (so it has some boundary-edges).
\end{itemize}
\end{definition}

Any inner-edge $e$ of a hole $H$ represents an opportunity to expand some topping into $H$. Such opportunity can only be blocked by an adjacent topping with an edge $e'$ perpendicular to $e$. Let $v$ be the vertex that connects $e$ and $e'$. We call $v$ a \textbf{blocking-vertex} of $e$. For example, $A_3$ is a blocking-vertex of the edge $A_3 A_4$. It is easy to see that, if $v$ is a blocking-vertex of $e$:
\begin{itemize}
\item $v$ is a convex vertex --- it connects two perpendicular edges of different toppings.
\item $v$ is an inner vertex --- it connects  two edges of toppings.
\item $v$ blocks only $e$ --- it cannot simultaneously block $e'$.
\end{itemize}
In a maximal arrangement, each inner-edge must have at least one blocking-vertex. Therefore:
\begin{align}
\label{eq:every-hole}
\text{
For every hole $H$:
\#inner-convex-vertices$(H)$ $\geq$ \#inner-edges$(H)$
}
\end{align}
We now consider inner-holes and boundary-holes separately.

An \emph{inner hole} has only inner-vertices and inner-edges, and their number must be equal, so:
\begin{align}
\label{eq:inner-hole}
\text{
  For every inner-hole $H$:~~~
 \#inner-edges$(H)$ $=$ \#inner-vertices$(H)$
}
\end{align}
Combining (\ref{eq:every-hole}) and (\ref{eq:inner-hole}) implies that, in an inner-hole, all vertices are convex, so:
\begin{align}
\label{eq:inner-hole-rectangle}
\text{
Every inner-hole is a rectangle.
}
\end{align}

A \emph{boundary-hole}'s boundary contains sequences of adjacent boundary-edges and  sequences of adjacent inner-edges. Each boundary sequence contains an alternating sequence of boundary-vertex --- boundary-edge --- boundary-vertex --- boundary-edge --- ... boundary-vertex.  So:
\begin{align}
\label{eq:boundary-hole}
\text{
For every boundary-hole $H$:~~~
\#boundary-vertices$(H)$ $\geq$
\#boundary-edges$(H) + 1$
}
\end{align}
Summing (\ref{eq:every-hole}) and (\ref{eq:boundary-hole}) gives:
\begin{align*}
\text{
For every boundary-hole $H$:~~~
\#vertices$(H)$ $\geq$ \#edges$(H) + 1$
}
\end{align*}
But this is impossible, since in every hole the total number of vertices and edges must be equal. Therefore:
\begin{align}
\label{eq:no-boundary-hole}
\text{
There are no boundary holes.
}
\end{align}
(\ref{eq:inner-hole-rectangle}) and  (\ref{eq:no-boundary-hole}) imply that, in a maximal arrangement, all holes look like Figure \ref{fig:holes-rect-worst}: each hole is a rectangle adjacent to four toppings.

\subsection{Upper bound part B: Counting the holes} \label{sub:rect-upper-b}
Our upper bound matches the lower bound of Subsection \ref{sub:rect-lower}: $m-\doubleroot{m}$.
The bound relies on the following lemmas:

\begin{lemma}
\label{lem:k1k2}
For all positive integers $k_1$, $k_2$, $t$:
\begin{equation*}
	(k_1-1)(k_2-1) - t 
	\leq 
	k_1 k_2 - t - \doubleroot{k_1 k_2 - t}.
\end{equation*}
\end{lemma}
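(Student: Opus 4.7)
The plan is to rewrite the claim as a cleaner statement about $\doubleroot{\cdot}$ alone, and then reduce it to the AM--GM inequality. Expanding $(k_1-1)(k_2-1)$ on the left and cancelling the common $k_1 k_2 - t$ on both sides, the inequality is equivalent to
\begin{equation*}
    \doubleroot{k_1 k_2 - t} \;\leq\; k_1 + k_2 - 1.
\end{equation*}
So the whole problem is to bound the quantity $\lceil 2\sqrt{N} - 1\rceil$, where $N := k_1 k_2 - t$, in terms of $k_1 + k_2$.

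First I would observe that since $t$ is a positive integer, $N \leq k_1 k_2$, and so by monotonicity of $\sqrt{\cdot}$,
\begin{equation*}
    2\sqrt{k_1 k_2 - t} - 1 \;\leq\; 2\sqrt{k_1 k_2} - 1.
\end{equation*}
Next, by the AM--GM inequality applied to $k_1$ and $k_2$,
\begin{equation*}
    2\sqrt{k_1 k_2} \;\leq\; k_1 + k_2,
\end{equation*}
so $2\sqrt{k_1 k_2 - t} - 1 \leq k_1 + k_2 - 1$.

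Finally, since $k_1 + k_2 - 1$ is an integer, the inequality is preserved after taking the ceiling of the left-hand side; that is, $\doubleroot{k_1 k_2 - t} \leq k_1 + k_2 - 1$, which is exactly the reduced form of the claim. I do not expect any real obstacle here: the argument is a chain of two standard inequalities (monotonicity and AM--GM), with the only subtle point being that the ceiling on the left is absorbed harmlessly because the bound on the right is already an integer.
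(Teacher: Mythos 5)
Your proof is correct. You make the same initial reduction as the paper --- the $-t$ cancels from both sides and, since $\doubleroot{\cdot}$ is nondecreasing, everything comes down to the single inequality $\doubleroot{k_1 k_2}\leq k_1+k_2-1$ --- but you establish that key inequality by a different and cleaner route. The paper fixes the sum $k_1+k_2$, observes that the product $k_1k_2$ is maximized when $k_1=k_2$ or $k_2=k_1+1$, and then verifies the inequality in those two extremal cases by explicitly evaluating $\doubleroot{k_1^2}=2k_1-1$ and $\doubleroot{k_1^2+k_1}=2k_1$ from the piecewise description of the ceiling function. You instead apply AM--GM to get $2\sqrt{k_1k_2}-1\leq k_1+k_2-1$ and note that the ceiling on the left is absorbed because the right-hand side is an integer. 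Your argument is shorter and avoids the case analysis on the ceiling; what the paper's version buys in exchange is the observation that the bound is attained with equality in the extremal cases, which is consonant with the tightness claims elsewhere in Section 3 (though not needed for the lemma itself). One point both arguments leave implicit: the quantity $k_1k_2-t$ must be positive for $\doubleroot{k_1k_2-t}$ to be meaningful; this holds in the application, where $k_1k_2-t$ is the number $m$ of rectangles.
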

\begin{proof}
First, we prove the inequality for $t=0$.
It is sufficient to show that $\doubleroot{k_1 k_2}\leq k_1+k_2-1$.
Indeed, for a fixed sum $k_1+k_2$, the product $k_1 k_2$ is maximized when $k_1=k_2$ or when $k_1+1=k_2$.
In the former case  $\doubleroot{k_1 k_2}=2k_1-1$, in the latter case $\doubleroot{k_1 k_2}=\lceil 2 \sqrt{k_1^2+k_1}\rceil - 1 = (2 k_1+1)-1 = 2k_1$. In both cases 
the inequality holds with equality.

For $t>0$, the difference (RHS minus LHS) is even larger, so the inequality remains true.
\end{proof}

Suppose a rectangle is partitioned into $m$ smaller rectangles (with no holes). Define a \emph{$3$-vertex} as a point in which three rectangles meet and a \emph{$4$-vertex} as a point in which four rectangles meet. 
\begin{lemma}
\label{lem:toy model}
Suppose a rectangular cake $C$ is partitioned into $m$ rectangles. Then, the number of $4$-vertices is at most $m-\doubleroot{m}$.
\end{lemma}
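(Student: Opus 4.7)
The plan is to prove the intermediate bound $I_4 \le m - k_1 - k_2 + 1$, where $k_1$ and $k_2$ are the numbers of columns and rows in the grid obtained by collecting all distinct $x$- and $y$-coordinates that appear as edges of the partition rectangles (so every rectangle is a union of grid cells and hence $m \le k_1 k_2$). Setting $t := k_1 k_2 - m \ge 0$ and applying Lemma~\ref{lem:k1k2} then yields
\[I_4 \le m - k_1 - k_2 + 1 = (k_1-1)(k_2-1) - t \le m - \doubleroot{m},\]
which is exactly the claim.

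The key identity I would establish is
\[m - I_4 = 1 + B_b + B_l + T_u + T_r,\]
where $B_b, B_l$ count boundary T-junctions on the bottom and left sides of $C$, and $T_u, T_r$ count interior T-junctions whose stems point up and right. The argument is a ``bottom-left corner'' correspondence: each rectangle is mapped to its unique BL corner, and a case analysis at each vertex of the partition shows that exactly one rectangle has its BL there precisely when the vertex is (a) the bottom-left corner of $C$, (b) a bottom or left boundary T-junction, (c) an interior T-junction with stem up or right, or (d) an interior $4$-vertex; all other vertex types host no BL corner.

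To show $B_b + T_u \ge k_1 - 1$, I would use a trace argument. For each of the $k_1 - 1$ interior $x$-coordinates $x_i$, some rectangle uses $x_i$ as a left or right edge, so the vertical line $x = x_i$ carries at least one interior edge of the partition; let $y^*$ be the smallest $y$ at which this line is an edge. If $y^* = 0$, then $(x_i,0)$ is a bottom boundary T-junction, hence a point of $B_b$. Otherwise the line transitions at $(x_i, y^*)$ from being inside a rectangle (below) to being an edge (above), making $(x_i, y^*)$ an interior T-junction with stem up, hence a point of $T_u$. Distinct $x_i$'s give distinct endpoints, so summing over $i$ gives the inequality, and the symmetric horizontal argument proves $B_l + T_r \ge k_2 - 1$. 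Together with the identity above this yields $m - I_4 \ge k_1 + k_2 - 1$, completing the chain.

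The hardest part will be the exhaustive vertex-by-vertex case analysis underpinning the BL identity: at each vertex type (the four corners of $C$, the four sides' boundary T-junctions, the four stem directions of interior T-junctions, and interior $4$-vertices) one must identify precisely which, if any, of the up to four surrounding rectangles owns the BL corner at that point. The enumeration is mechanical once T-junctions are classified by stem direction, but it is the longest bookkeeping step of the proof.
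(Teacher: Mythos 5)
Your proof is correct, but it takes a genuinely different route from the paper. The paper's argument is dynamic: it repeatedly picks a $3$-vertex and extends a splitting segment until it hits the boundary of $C$ or another $3$-vertex, converting $3$-vertices to $4$-vertices while tracking how the counts of rectangles and $4$-vertices change; after all $3$-vertices are eliminated the partition is the full $k_1\times k_2$ grid, and one reads off that the original number of $4$-vertices was at most $(k_1-1)(k_2-1)-t$ with $m=k_1k_2-t$. Your argument is static: the bottom-left-corner correspondence gives the exact identity $m = 1 + B_b + B_l + T_u + T_r + I_4$ (each vertex of the arrangement hosts the BL corner of at most one rectangle, and the case analysis identifies exactly which vertex types host one), and the two sweep arguments give $B_b+T_u\ge k_1-1$ and $B_l+T_r\ge k_2-1$, which together yield the same intermediate bound $I_4\le (k_1-1)(k_2-1)-t$. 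From that point on the two proofs coincide, both finishing with Lemma~\ref{lem:k1k2} (and both implicitly using it with $t\ge 0$ rather than $t>0$, which the lemma's proof covers). What your approach buys is that it avoids the induction over segment extensions, where the paper leaves a couple of cases implicit (e.g.\ what happens when an extended segment passes through an existing $4$-vertex, and why the chained extensions of case (b) terminate cleanly); the price is the exhaustive local classification of vertex types, which you correctly flag as the main bookkeeping burden. One small point worth making explicit when you write it up: every $4$-vertex is automatically interior (at a boundary point of $C$ at most two rectangles can meet), so your $I_4$ really is the quantity in the lemma statement.
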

\begin{figure}
	\begin{center}
		\includegraphics{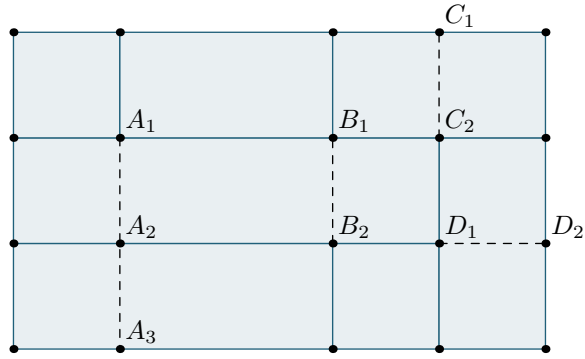}
	\end{center}
	\caption{\label{fig:toy-rect} 
		Bounding the number of $4$-vertices in a rectangle tiling.
	}
\end{figure}
\begin{proof}
First, suppose that there are only $4$-vertices (no $3$-vertices). Then $C$ is partitioned, by the lines containing all sides of all rectangles, to a grid of $k_1$ by $k_2$ smaller rectangles. So the total number of $4$-vertices is $(k_1-1)(k_2-1)$ and the total number of rectangles is $m=k_1 k_2$. Then the lemma follows from Lemma \ref{lem:k1k2} setting $t=0$.

Next, suppose that there are 3-vertices (refer to Figure \ref{fig:toy-rect} for the examples).
Choose one $3$-vertex (e.g. $A_1$).
Add a segment that separates the rectangle that does not have vertex at $v$ to two parts (e.g. the segment $A_1 A_2$).
This converts the $3$-vertex to $4$-vertex.
Additionally, there are several cases depending on where the other end of the additional segment lands:

(a) If it lands on the boundary of $C$ (like with $D_1 D_2$ and $C_2 C_1$), then no further action is required; the number of rectangles grows by $1$ and the number of $4$-vertices grows by $1$.

(b) If it lands on the boundary of another rectangle (like with $A_1 A_2$), then a new $3$-vertex is created, and can be handled in the same way by continuing the segment. This must stop because eventually the line hits the boundary of $C$.

(c) If it lands on a $3$-vertex (like with $B_1 B_2$), then an additional $4$-vertex is created; the number of rectangles grows by $1$ and the number of $4$-vertices grows by $2$.

In all cases, the number of $4$-vertices grows at least as much as the number of rectangles.

Continue this procedure until all $3$-vertices disappear. Then, the partition is a grid formed by the lines containing the sides of the all original rectangles. Suppose the grid has $k_1\times k_2$ rectangles (e.g. in Figure \ref{fig:toy-rect}, $k_1=3$ and $k_2=4$). So, the total number of rectangles is $k_1 k_2$ and the total number of $4$-vertices is $(k_1-1)(k_2-1)$. 
Suppose we had $t$ removal steps. So, the original number of $4$-vertices was at most $(k_1-1)(k_2-1)-t$ and the original number of rectangles $m$ was exactly $k_1 k_2-t$. From Lemma \ref{lem:k1k2} it follows that the number of original $4$-vertices is at most $m- \doubleroot{m}$.
\end{proof}

\begin{figure}
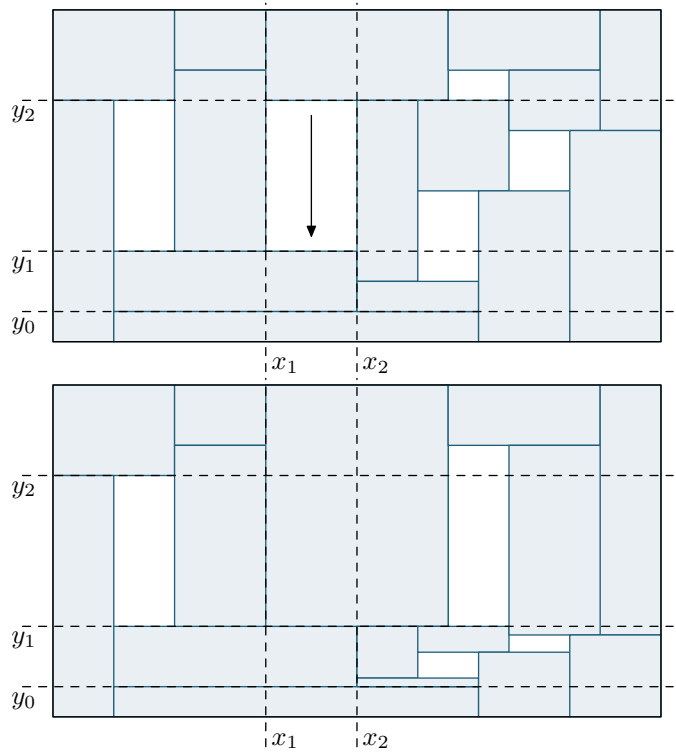

	\begin{center}
\includegraphics		{graphics/rect-1.mps}\\
\includegraphics		{graphics/rect-2.mps}
	\end{center}
	\caption{\label{fig:contracting_holes} 
\textbf{Top:} An arrangement with $5$ holes.
\textbf{Bottom:} One of the holes is contracted vertically. The other rectangles and holes change their size but remain in the picture.
}
\end{figure}
We now return to the proof of Theorem \ref{thm:boxes}.
Recall from Subsection \ref{sub:rect-upper-a} that we have a maximal arrangement of toppings, where each hole is a rectangle bounded by four toppings. Each of these toppings is blocked from expanding into the hole, either by the next topping counter-clockwise 
(as in Figure \ref{fig:holes-rect-worst}) or by the next topping clockwise.
Our goal is to show that the total number of such holes is at most $m-\doubleroot{m}$.
We will remove the holes one by one, by contracting each hole to a 4-vertex without changing the number of toppings. 

Each hole is contracted in two steps: \emph{vertical} and \emph{horizontal}.
The vertical contraction is illustrated in Figure \ref{fig:contracting_holes}. Consider the hole $[x_1,x_2]\times[y_1,y_2]$. There are two cases regarding the toppings surrounding the hole: either the top has a side at $x=x_1$ and the bottom has a side at $x=x_2$ (in the clockwise case), or vice versa (in the counter-clockwise case). These cases are entirely analogous. We assume the former case, as in Figure \ref{fig:contracting_holes}/Top. 

Let $y_0$ be the bottom coordinate of the topping below the hole. Transform the arrangement of rectangles in the following way. Every point $(x,y)$, where $x>x_2$ and $y\in[y_0,y_2]$, is transformed to $(x,y')$, where:
\begin{align*}
y' := y_0 + \frac{y_1-y_0}{y_2-y_0}\cdot (y-y_0),
\end{align*}
so the ray $(x>x_2, y=y_2)$ goes to $(x>x_2, y=y_1)$ and the ray $(x>x_2, y=y_0)$ remains in its place. 
Now, the rectangle on top of the hole can be extended down, so that it covers the hole (see \ref{fig:contracting_holes}/Bottom). 

The horizontal contraction is very similar. Let $x_0$ be the left coordinate of the rectangle to the left of the hole. Every point $(x,y)$, where $y<y_1$ and $x\in[x_0,x_2]$, is transformed to $(x',y)$, where:
\begin{align*}
x' := x_0 + \frac{x_1-x_0}{x_2-x_0}\cdot (x-x_0),
\end{align*}
so the ray $(y<y_1, x=x_2)$ goes to $(y<y_1, x=x_1)$ and the ray $(y<y_1, x=x_0)$ remains in its place. Now, instead of the hole, there is a single $4$-vertex at the point $(x_1,y_1)$. 

The shrinking does not change the combinatorics of the arrangement: the number of toppings does not change, no new holes are created, and no 4-vertices disappear.

After all holes are contracted, the situation is as in Lemma \ref{lem:toy model}, where the number of $4$-vertices is upper-bounded. Therefore, the number of holes in the original configuration is upper-bounded by the same expression. \qed

\subsection{Number of holes without expanding}
For completeness, we formulate the analogue of Theorem~\ref{thm:nonconvex holes} for axis-parallel rectangles.

\begin{theorem}
	\label{thm:nonrectangular holes}
Let $Z_1,\ldots,Z_m$, with $m\geq 3$, be pairwise-interior-disjoint axes-parallel rectangles in the plane. Define a \emph{hole} as a connected component of $\mathbb{R}^2\setminus \cup_{i=1}^m Z_i$. Then there are at most $m-2$ holes. This bound is tight.
\end{theorem}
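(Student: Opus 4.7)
The plan is to mirror the proof of Theorem~\ref{thm:nonconvex holes} but to exploit the axis-parallel structure of the rectangles for a tighter face-count bound. I would build the same intersection graph $G=(V,E)$, with the $m$ rectangles as vertices and an edge between every two rectangles whose boundaries meet; exactly as before, $G$ is simple and planar, every hole of $\mathbb{R}^2\setminus\bigcup Z_i$ is a face of $G$, and I may assume $G$ is connected by the same bridging trick, so $h\leq|F|$.

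The key new ingredient is a one-dimensional Helly argument: if three axis-parallel rectangles pairwise intersect, then their three $x$-projections are pairwise-intersecting intervals and hence share a common point $x^\ast$ by Helly's theorem in $\mathbb{R}$, and likewise the $y$-projections share some $y^\ast$; so $(x^\ast,y^\ast)$ lies in all three rectangles. Consequently, in a \emph{generic} arrangement with no three rectangles sharing a common boundary point, no three rectangles pairwise meet, and $G$ is triangle-free. For a simple triangle-free planar graph, the girth-$4$ face-degree bound $2|E|\geq 4|F|$ combined with Euler's formula gives $|E|\leq 2|V|-4$ and $|F|\leq|V|-2$, so $h\leq|F|\leq m-2$ in the generic case.

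To reduce the general case to the generic one, I would perturb the arrangement by infinitesimal shifts that break every common meeting point of three or more rectangles, without creating overlaps. Such small shifts cannot merge existing holes (which are separated by rectangles of positive thickness) and at worst open small ``slit'' holes near the former meeting points, so the perturbed arrangement has at least as many holes as the original and the generic bound transfers. Tightness is realized by a rectangular frame of four rectangles (giving $h=2=m-2$) subdivided inside by $m-4$ parallel bars, producing $h=m-2$ for every $m\geq 3$. The main obstacle is making the perturbation step fully rigorous when several multi-rectangle meetings are adjacent; a cleaner alternative would invoke the Nerve Theorem to conclude $\bigcup Z_i\simeq\mathrm{Cl}(G)$ and exploit that Helly fills in every triangle of $G$ as a $2$-simplex of $\mathrm{Cl}(G)$, directly killing the corresponding $1$-cycles.
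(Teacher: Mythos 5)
Your proposal follows essentially the same route as the paper: the same intersection graph $G$, the same reduction to counting faces, Euler's formula, and the face-degree bound $2|E|\geq 4|F|$ yielding $|F|\leq |V|-2$. In fact, your one-dimensional Helly observation is exactly the (unstated) reason behind the paper's terse claim that every face touches at least four edges, so that part of your write-up is a genuine improvement in explicitness. The tightness construction (a frame of four rectangles with parallel bars inside) is also the same as the paper's figure, though note it needs $m\geq 4$; for $m=3$ the single unbounded hole already attains $m-2=1$.

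The one soft spot is the perturbation reduction, and you are right to flag it: the claim that small shifts ``cannot merge existing holes'' is not justified, since opening a gap between two rectangles that previously touched along a segment can connect two holes that were separated precisely by that contact, decreasing the hole count and breaking the direction of the inequality you need. Fortunately the detour is unnecessary. Your own Helly argument shows that any three pairwise-meeting axis-parallel rectangles share a common point $p$; each is star-shaped with respect to $p$, hence so is their union, which is therefore simply connected and cannot enclose a bounded hole. Consequently every face of $G$ that actually contains a hole has at least four edges on its boundary walk, even in degenerate arrangements. Writing $F_3$ for the number of remaining (degenerate, hole-free) triangular faces, the count $2|E|\geq 3F_3+4(|F|-F_3)$ together with Euler's formula gives $\#\text{holes}\leq |F|-F_3\leq |V|-2-F_3/2\leq m-2$, so the degeneracies only strengthen the bound and no genericity assumption is needed.
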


\begin{proof}
The proof of the upper bound is similar to Theorem~\ref{thm:nonconvex holes}; the only difference is that here, every face of the planar graph touches at least \emph{four} edges (instead of three). Therefore, we have $2|E|\geq 4|F|$ (instead of $2|E|\geq 3|F|$). Substituting this in Euler's formula $|V|-|E|+|F|=2$ gives that $|F|\leq |V|-2$, so the number of holes is at most  $m-2$ as claimed.

An example giving the lower bound is shown on Figure~\ref{fig:many holes}.
Four of the rectangles form a long box and all others are placed between them forming holes.
\end{proof}

\begin{figure}
\begin{center}
\includegraphics		{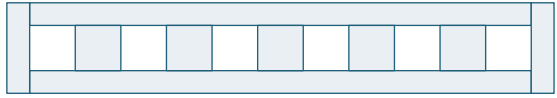}\\
\end{center}
\caption{\label{fig:many holes} 
	An arrangement of $m=9$ rectangles with $m-2=7$ holes (including the outer hole).
}
\end{figure}

\section{Rectilinear-polygonal cake and rectangular pieces}
In this section, the toppings are still axes-parallel rectangles, but now the cake $C$ can be any simply-connected axes-parallel rectilinear polygon. The ``complexity'' of a rectilinear polygon is characterized by the number of its \emph{reflex vertices} --- vertices with internal angle $270^\circ$. It is known that a rectilinear polygon with $T$ reflex vertices can always be partitioned to at most $T+1$ rectangles \cite{Keil2000Polygon,Eppstein2010GraphTheoretic}, and this bound is tight when the vertices of $C$ are in general position. Since our goal is to bound the number of blank rectangles, we expect the bound to depend on $T$, in addition to $m$ (the number of toppings).

\begin{theorem}
	\label{thm:boxes and polygonal cake}
Let $C$ (``cake'') be a simply-connected axes-parallel polygon with $T$ reflex vertices,
and $Z_1, \dots,Z_m$ (``toppings'') be pairwise-disjoint axes-parallel rectangles in $C$.
There exists a partition of $C$ into $m+b$ axes-parallel rectangles, $C=Z'_1 \sqcup \cdots \sqcup Z'_{m+b}$,
where $Z_i\subseteq Z'_i$ for all $i\leq m$, and $b\leq m+T-\doubleroot{m}$. 
Moreover, for every $m$ and $T$ there exists an arrangement where in every such partition, $b = m+T-\doubleroot{m}$.	
\end{theorem}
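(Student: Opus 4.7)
The plan is to prove the upper bound by induction on $T$ and the lower bound by extending the construction of Subsection~\ref{sub:rect-lower}. Existence of a maximal arrangement in $C$ follows from the same greedy expansion / Kuratowski--Zorn argument already used in the proofs of Theorems~\ref{thm:convex case} and~\ref{thm:boxes}: iteratively replace each topping by an inclusion-maximal axes-parallel rectangle contained in $C$, containing it, and disjoint from the other toppings.

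For the upper bound, the base case $T=0$ is exactly Theorem~\ref{thm:boxes}. For the inductive step, fix a maximal arrangement and choose any reflex vertex $v$ of $C$. I cut $C$ along an axes-parallel rectilinear polygonal path that begins at $v$ and ends in the relative interior of some edge of $\partial C$. The path is required to stay inside $C$ and to avoid the interior of every topping, detouring along topping boundaries when the direct axes-parallel extension from $v$ is blocked by toppings. Such a cut splits $C$ into two simply-connected rectilinear sub-cakes $C_1, C_2$ with $m_1+m_2 = m$ toppings and, by a careful choice of the cut, combined reflex-vertex count $T_1+T_2 \le T-1$: the vertex $v$ becomes convex (or flat) in the sub-cakes, while no new reflex vertex is introduced. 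Applying the inductive hypothesis then gives
\[
b \;\le\; b_1 + b_2 \;\le\; m + (T-1) - \bigl(\doubleroot{m_1}+\doubleroot{m_2}\bigr) \;\le\; m + T - \doubleroot{m},
\]
where the last step uses the elementary inequality $\doubleroot{m_1}+\doubleroot{m_2} \ge \doubleroot{m}-1$. If one of $m_1, m_2$ is zero, equality holds because $\doubleroot{0}=-1$; if both are at least $1$, the stronger $\doubleroot{m_1}+\doubleroot{m_2} \ge \doubleroot{m_1+m_2}$ follows from $\sqrt{m_1}+\sqrt{m_2} \ge \sqrt{m_1+m_2}$ together with a short case check using the explicit formula for $\doubleroot{\,\cdot\,}$ given after the statement of Theorem~\ref{thm:boxes}.

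For the lower bound, I modify the tight construction of Subsection~\ref{sub:rect-lower} by carving $T$ disjoint small L-shaped notches into the boundary of the rectangular cake, placed in regions free of existing toppings. Each notch adds exactly one reflex vertex to $C$ and, being sized so that no adjacent topping or blank can be expanded to cover it, forces exactly one additional rectangular blank. The modified cake has $T$ reflex vertices, and the arrangement achieves $m + T - \doubleroot{m}$ blanks; since the original construction was tight, no partition of the modified arrangement can do better.

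The main obstacle is the construction of the cut in the inductive step: one must show that for every reflex vertex $v$ of $C$ one can draw a rectilinear polygonal path to $\partial C$ that avoids all topping interiors and whose bends do not create new reflex vertices in either sub-cake. The hardest case is when three topping corners meet at $v$ so that the full $270^\circ$ interior wedge at $v$ is covered by toppings and every axes-parallel extension from $v$ runs along topping edges. In this case the cut must travel along topping boundaries and eventually navigate past whatever obstructions appear farther inside $C$; executing the case analysis for such detours, while controlling the reflex count in both sub-cakes, is the main technical step of the proof.
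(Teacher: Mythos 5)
Your upper-bound strategy (induction on $T$ via a rectilinear cut from a reflex vertex) is genuinely different from the paper's, which never cuts the cake: the paper expands to a maximal arrangement, shows every inner hole is still a rectangle, and then shows every boundary hole $H$ satisfies $\#\text{reflex-hole-vertices}(H)+1\leq\#\text{reflex-}C\text{-vertices}(H)$, so that all boundary holes together decompose into at most $T$ rectangles, each reflex cake-vertex being chargeable to at most one boundary hole. Your route founders exactly where you flag it, and the obstacle is worse than a missing case analysis: any bend of the cut at a point in the interior of $C$ necessarily creates a $270^\circ$ angle in one of the two sub-cakes, i.e.\ a new reflex vertex, so a cut that ``detours along topping boundaries'' cannot satisfy $T_1+T_2\leq T-1$. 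Hence the cut must be a single straight axes-parallel segment emanating from $v$, and such a segment can be blocked by the interior of a topping that straddles its supporting line (in which case the cut would split that topping between the two sub-cakes, breaking the induction). You would need to prove that some reflex vertex always admits an unobstructed straight cut, or devise an accounting that charges the extra reflex vertices created by bends to something bounded; neither is done, and the first claim is not obviously true even for maximal arrangements. (The arithmetic part is fine: $\doubleroot{m_1}+\doubleroot{m_2}\geq\doubleroot{m_1+m_2}-1$ does hold, so the induction would close if the cut existed.)

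The lower bound also has a gap. Carving a notch into the interior of an edge of the rectangular cake creates at least two reflex vertices, not one; a single reflex vertex per modification is achievable only at the four corners, or by appending a staircase as the paper does (a narrow ``staircase'' with $T$ steps attached to the cake, contributing $T$ reflex vertices and forming one boundary hole that cannot be covered by fewer than $T$ rectangles). Moreover, ``since the original construction was tight, no partition of the modified arrangement can do better'' is not an argument: altering the cake changes the set of admissible partitions, so you must re-verify that every partition of the new instance needs $m+T-\doubleroot{m}$ blanks, which the paper does by observing that the staircase region is unreachable by any topping and intrinsically requires $T$ rectangles.
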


\subsection{Lower bound}
\begin{figure}
\begin{center}
\includegraphics{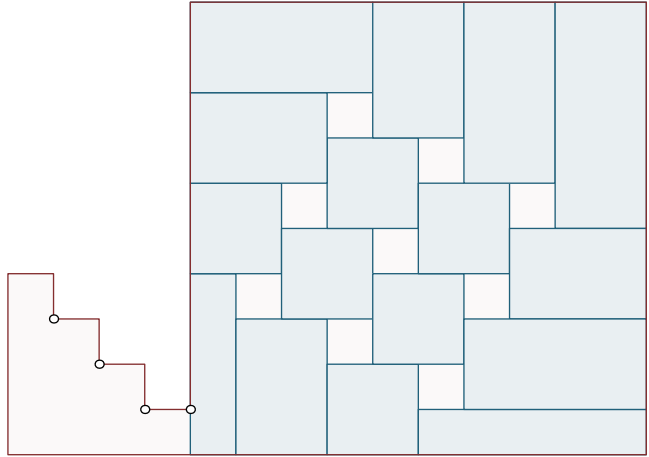}
\end{center}
\caption{\label{fig:holes-rectilinear-worst}
A rectilinear cake made of a union of a rectangle and a rectilinear ``staircase'' with $T=4$ steps. The rectangle part contains a maximal arrangement of $m=16$ toppings and $m-\doubleroot{m}=9$ blanks. The rectilinear part adds $T$ reflex vertices (circled) and $T$ rectangular blanks.
}
\end{figure}
Take a worst-case arrangement of rectangular toppings in a rectangular cake, such as the one shown in Figure \ref{fig:holes-rect-worst}/Right.
For any $T\geq 1$, it is possible to convert the rectangular cake to a rectilinear polygon with $T$ reflex vertices by adding a narrow rectilinear ``staircase'' with $T$ ``steps'', as shown in Figure \ref{fig:holes-rectilinear-worst}.

The number of blanks in the rectangle is $m-\doubleroot{m}$.
The additional rectilinear part is a single \emph{hole}, but it contains $T$ \emph{blanks} (its rectangular components). All in all, we need $m+T-\doubleroot{m}$ blanks.
\subsection{Upper bound}
\begin{figure}
\begin{center}
\includegraphics{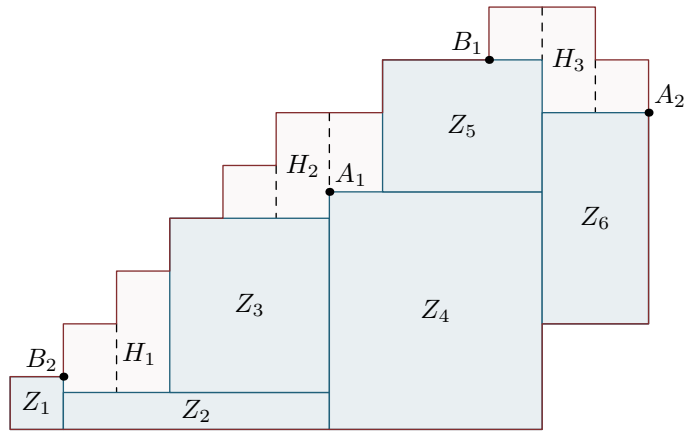}
\end{center}
\caption{\label{fig:holes-rectilinear}A rectilinear cake with six rectangular topping $Z_1, \dots, Z_6$, and three rectilinear holes $H_1$, $\dots$, $H_3$.
Note the configuration is not maximal since $Z_6$ can be extended upwards into $H_3$.
The dashed lines inside the holes indicate a possible partitioning of the holes to eight rectangular blanks.}
\end{figure}
We first expand the arrangement to a maximal arrangement of toppings as in Subsection \ref{sub:rect-upper-a}. 

In the case of a rectangle cake, we proved that all holes are inner-holes and rectangular. This is no longer true here: Figure \ref{fig:holes-rectilinear} shows that we can have boundary-holes that are not rectangular.
It \emph{is} still true that in a maximal arrangement every hole is simply-connected, since a hole that is not simply-connected contains a topping which can be expanded. 

Moreover, it is still true that every inner-edge must be blocked by a hole-edge perpendicular to it, and the blocking vertex must be a convex-vertex. However, now the blocking-vertex can be either an inner-convex-vertex, or a boundary-vertex that is a reflex-vertex of $C$ (like $B_1$ in Figure \ref{fig:holes-rectilinear}). We
call such vertex \emph{connection-reflex-$C$-vertex} since it connects an inner-edge and a boundary-edge. 
So now, for every hole $H$:

\begin{align}
\label{eq:every-hole-2}
\text{
\#connection-reflex-$C$-vertices$(H)$
+
\#inner-convex-vertices$(H)$
 $\geq$
  \#inner-edges$(H)$
}
\end{align}
In an inner-hole there are only inner-vertices. Therefore, conclusion (\ref{eq:inner-hole-rectangle}) is still true --- every inner-hole is a rectangle.

In a boundary-hole, inequality (\ref{eq:boundary-hole}) is still true, since in each sequence of adjacent boundary-edges, there is a vertex for each edge plus one additional vertex:
\begin{align}
\tag{\ref{eq:boundary-hole}}
\text{
For every boundary-hole $H$:
\#boundary-vertices$(H)$ $\geq$
\#boundary-edges$(H) + 1$
}
\end{align}

Summing (\ref{eq:every-hole-2}) and (\ref{eq:boundary-hole}) gives:
\begin{align}
\label{eq:boundary-hole-3}
\text{
\#connection-reflex-$C$-vertices$(H)$
$+$
\#vertices$(H)$
$-$
\#inner-nonconvex-vertices$(H)$
}
\\
\notag
\text{
 $\geq$ \#edges$(H)+1$
}
\end{align}

Now, in every hole, \#edges$(H)=$\#vertices$(H)$. Combining this with (\ref{eq:boundary-hole-3}) gives:
\begin{align}
\label{eq:reflex-vertices}
\text{
\#inner-nonconvex-vertices$(H)$
$+1$
$\leq$
\#connection-reflex-$C$-vertices$(H)$
}
\end{align}
In addition, the boundary of $H$ contains some boundary-reflex-vertices, all of which are reflex-vertices of $C$ that are not connection-vertices:
\begin{align}
\text{
\#boundary-reflex-hole-vertices$(H)$
$=$
\#boundary-reflex-$C$-vertices$(H)$
}
\end{align}
Adding the latter two inequalities gives, for every boundary-hole $H$:
\begin{align}
\label{eq:reflex-vertices-total}
\text{
\#reflex-hole-vertices$(H)$
$+1$
$\leq$
\#reflex-$C$-vertices$(H)$
}
\end{align}

A rectilinear polygon with $x$ reflex vertices can always be partitioned to at most $x+1$ rectangles \cite{Keil2000Polygon,Eppstein2010GraphTheoretic}. Therefore, each boundary-hole $H$ can be partitioned into rectangles whose number is at most the number of reflex-cake-vertices in the boundary of $H$. 
Moreover, every reflex-cake-vertex is a vertex of at most one boundary-hole. Therefore, summing over all boundary-holes gives that all boundary-holes can be partitioned into rectangles whose total number is at most $T$ --- the number of reflex-vertices of $C$. 

Adding the (at most) $m-\doubleroot{m}$ rectangular inner-holes gives that the total number of rectangular blanks is at most: $T+m-\doubleroot{m}$. \qed

\section{Counter-examples and Open Questions}
\label{sec:conclusion}
We considered three families of pieces: polygons, convex figures, and axes-parallel rectangles. For each of these families, we proved that: (a) any arrangement of toppings in the family can be expanded to a maximal arrangement, and (b) the number of blanks in any maximal arrangement is bounded.
These two facts are not necessarily true for other families. We provide several examples below.\footnote{These examples are based on suggestions by anonymous referees.}

\subsection{Rectangles with constrained lengths}
Suppose that all pieces must be axes-parallel rectangles whose lengths are irrational numbers, while the cake is the unit square. Then, even a single topping cannot be expanded to a maximal arrangement. However, it may still be possible to find a partition with a bounded number of blanks. For example, with a single topping it is sufficient to add 3 blanks. We did not calculate how many blanks may be needed in general.
\subsection{Path-connected sets}

\begin{figure}
	\begin{center}
		\includegraphics{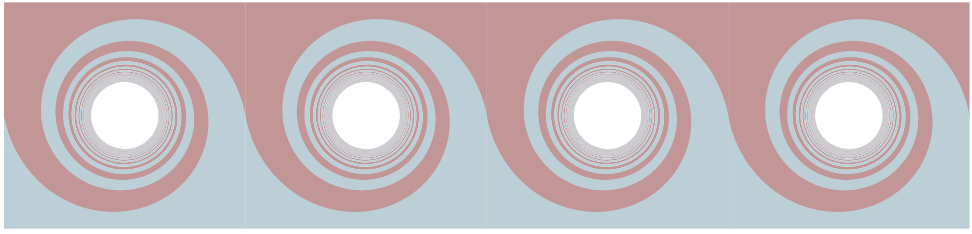}
	\end{center}
		
	\caption{
		\label{fig:spirals} 
A rectangular cake with 
two simply-connected toppings. The circular blanks cannot be attached to any topping without crossing the other one. Therefore there is an unbounded number of blanks (4 of them are shown).
}
\end{figure}

Suppose that all pieces must be path-connected. This very natural constraint is not as simple as it seems: the number of blanks in a maximal arrangement might be unbounded.
An example is shown in Figure~\ref{fig:spirals}. The cake is a rectangle. There are two toppings; they are not only path-connected but also simply-connected.  There are four circular blanks which cannot be attached to any topping, since any path from a blank to a topping must cross the other topping. Therefore the arrangement is maximal.
It is straightforward to extend this example to contain any number of blanks.

If the pieces are required to be connected but not path-connected, then the discs \emph{can} be attached to any of the toppings, so in a maximal expansion of this example there are no blanks. We do not know if this is always the case when the cake and pieces are connected.

\subsection{Closed path-connected sets}
One way to overcome the problem in the previous section is to require that the toppings and pieces be topologically closed, in addition to being path-connected. However, in this case a maximal arrangement might not exist. 

For example, suppose the cake is a union of:
\begin{enumerate}
\item the curve $(x,\sin(1/x))$ for $x\in(0,1]$, 
\item the segment $[(0,-1),(0,1)]$, and --
\item an additional curve connecting the origin and the point $(1, \sin(1))$. 
\end{enumerate}
This cake is closed and path-connected.
The two toppings are any two points on curve (3). They too are closed and path-connected.
But there is no maximal expansion to pieces that are closed and path-connected: if an expanded piece intersects both curve (1) and segment (2) then it is not path-connected, and if a piece contains only curve (1) then it is not closed.

\subsection{Optimality of the greedy algorithm}

\begin{figure}
	\begin{center}
\includegraphics{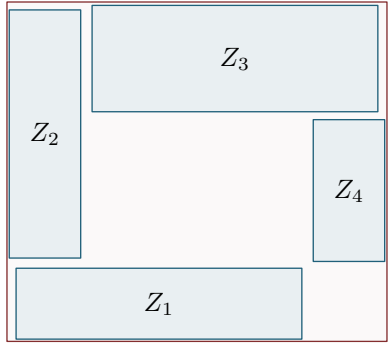}
	\end{center}
\caption{
\label{fig:not_optimal}	
	A configuration of toppings that can be extended in two ways: optimal ($Z_4$ leftwards) and sub-optimal ($Z_4$ downwards).
}
\end{figure}

Suppose we are given a cake with $m$ toppings and we want to expand the toppings such that the number of blanks is as small as possible. Suppose we use the following greedy algorithm:
\begin{framed}
Find a maximal expansion of each topping in turn, in an arbitrary order.
\end{framed}
The results in this paper imply that, in the four cases studied, the number of blanks attained by this greedy algorithm is upper-bounded by some function of $m$. However, our results do \emph{not} imply that the number of blanks is minimal: 
in specific cases it may possible to extend the toppings in a way that attains a smaller number of blanks --- less than the worst-case minimum. For example, in Figure \ref{fig:not_optimal}, if~$Z_4$ is extended downwards, then we get one blank, which is the worst-case minimum for $m=4$ toppings. However, if $Z_4$ is extended leftwards, then we can get zero blanks.

This opens an interesting algorithmic question: given a specific arrangement of toppings, how can we find an expansion with a minimum number of blanks?
A related natural question is how to minimize, instead of the number of blanks, their total perimeter or area.

\subsection{Restricted toppings}
This paper only considered cases where the geometric constraint on the initial toppings is the same as on the final pieces. But when the geometric constraints on the toppings are stricter, the upper bound might be lower.

For example, suppose the toppings must be points. Then, in both the convex case and the rectangular case the upper bound is $b=0$. In the  convex case we can take the Voronoi tesseletion (it works for disks as well; in this case we get weighted Voronoi). In the rectangular case, we can just separate the points by vertical lines, and if two points have same x-coordinate, separate them by horizontal lines.

\section*{Acknowledgements}
This paper started as a question in the Math Overflow website \cite{mathoverflow2016partition}.
Also participated in the discussion:  Steven Stadnicki,  Fedor Petrov and  Mikhail Ostrovskii.
We are grateful to Reuven Cohen, David Peleg, Herbert Edelsbrunner, Rom Pinchasi, Ashkan Mohammadi, Jorge Fernandez and N. Bach for discussions about the problem studied in this paper.

We would like to thank the editorial board of SIDMA and three anonymous referees for the careful reading of our paper and several useful suggestions that significantly improved its readability and correctness.

A. A. is supported by People Programme (Marie Curie Actions) of the European Union's Seventh Framework Programme (FP7/2007-2013) under REA grant agreement n$^\circ$[291734].

E. S.-H. was supported by the Doctoral Fellowships of Excellence at Bar-Ilan University, the Mordecai and Monique Katz Graduate Fellowship Program at Bar-Ilan University, and ISF grant 1083/13.

\bibliographystyle{abbrv}
\bibliography{bib}

\vskip 1cm

\end{document}